\newtheorem{thm}{Theorem}
\newtheorem{prop}[thm]{Proposition}
\theoremstyle{remark}
\newtheorem{rem}[thm]{Remark}
\newcommand{\FF}{\mathbb{F}}
\DeclareMathOperator{\wt}{wt}
\begin{document}

\title{Performance of ternary double circulant, double twistulant,
and self-dual codes\footnote{Keywords: 
double circulant code, double twistulant code, 
self-dual code, decoding error probability. 
Mathematics Subject Classification: 94B05, 94B70.}
}

\author{
T. Aaron Gulliver\thanks{Department of Electrical and Computer Engineering,
University of Victoria,
P.O. Box 1700, STN CSC, Victoria, BC,
Canada V8W 2Y2.
email: agullive@ece.uvic.ca}
\text{ and }
Masaaki Harada\thanks{
Research Center for Pure and Applied Mathematics,
Graduate School of Information Sciences,
Tohoku University, Sendai 980--8579, Japan.
email: mharada@m.tohoku.ac.jp.}
}

\maketitle

\begin{abstract}
We study the performance of ternary isodual codes
which are not self-dual and ternary self-dual codes,
as measured by the decoding error probability in bounded distance decoding.
We compare the performance of ternary double circulant and double twistulant codes
which are not self-dual with ternary extremal self-dual codes.
We also investigate the performance of ternary self-dual codes
having large minimum weights.
\end{abstract}

\section{Introduction}
\label{Sec:Intro}
A (ternary) $[n,k]$ {\em code} $C$ is a $k$-dimensional vector subspace
of $\FF_3^n$,
where $\FF_3$ denotes the finite field of order $3$.
All codes in this paper are ternary.
We shall take the elements of $\FF_3$ to be either
\{$0,1,2$\} or \{$0,1,-1$\}, using whichever form is more convenient.
The parameter $n$ is called the {\em length} of $C$.
The {\em weight} $\wt(x)$ of a vector $x \in \FF_3^n$ is
the number of non-zero components of $x$.
A vector of $C$ is called a {\em codeword}.
The minimum non-zero weight of all codewords in $C$ is called
the {\em minimum weight} of $C$ and an $[n,k]$ code with minimum
weight $d$ is called an $[n,k,d]$ code.
Two codes $C$ and $C'$ are {\em equivalent} if there exists a
$(1,-1,0)$-monomial matrix $M$ with $C' =\{c M \mid c \in C \}$.

Let $C$ be an $[n,k,d]$ code.
Throughout this paper,
let $A_i$ denote the number of codewords of weight $i$ in $C$.
The  sequence $(A_0,A_1,\ldots,A_n)$ is called the {\em weight distribution} of $C$.
A code $C$ of length $n$ is said to be
{\em formally self-dual\/} if $C$ and $C^\perp$
have identical weight distributions where
$C^\perp$ is the dual code of $C$.
A code $C$ is {\em isodual\/} if $C$ and $C^\perp$ are equivalent, and
$C$ is {\em self-dual\/} if $C=C^\perp$.
It is known that a ternary self-dual code of length $n$ exists if and
only if $n \equiv 0 \pmod 4$.
A self-dual code is an isodual code, and an isodual code is a
formally self-dual code.
Double circulant and double twistulant codes
are a remarkable class of isodual codes.

The question of decoding error probabilities was studied
by Faldum, Lafuente, Ochoa and Willems~\cite{FLOW}
for bounded distance decoding.
Let $C$ and $C'$ be $[n,k,d]$ codes with weight distributions
$(A_0,A_1,\ldots,A_n)$ and $(A'_0,A'_1,\ldots,A'_n)$, respectively.
Suppose that symbol errors are independent and the symbol error probability is small.
Then $C$  has a smaller decoding error probability than $C'$
if and only if
\begin{equation}\label{eq:WD}
(A_0,A_1,\ldots,A_n) \prec (A'_0,A'_1,\ldots,A'_n),
\end{equation}
where $\prec$ means the lexicographic order, that is,
there is an integer $s \in \{0,1,\ldots,n\}$ such that
$A_i=A'_i$ for all $i < s$ but $A_s < A'_s$~\cite[Theorem~3.4]{FLOW}.
We say that $C$ {\em performs better} than $C'$ if~\eqref{eq:WD} holds.

In this paper, we investigate the performance of
optimal double circulant and double twistulant codes
which are not self-dual,
and self-dual codes with large minimum weights
as measured by the decoding error probability
in bounded distance decoding.
In Section~\ref{sec:DCC}, we compare the performance of double circulant
and double twistulant codes
which are not self-dual with extremal self-dual codes
for lengths $n < 48$.
Thus, we consider double circulant
and double twistulant codes which are not self-dual
only for lengths $n \equiv 0 \pmod 4$.
Self-dual codes are considered in Section~\ref{sec:SD}.
The weight distribution of an extremal self-dual code is
uniquely determined for each length.
For lengths up to $64$, the existence of an extremal self-dual code
is known (see~\cite[Table~4]{GG}).
The largest minimum weight of a self-dual code is $3\lfloor n/12 \rfloor$
for lengths $n=72,96$, and the largest minimum weight among currently known self-dual
codes is $3\lfloor n/12 \rfloor$
for lengths $n=68,76,80,84,88,92$ (see~\cite[Table~4]{GG}).
Hence, we investigate the performance of
self-dual codes of length $n$ and minimum weight
$3\lfloor n/12 \rfloor$ for $n=68,72,76,80,84,88,92,96$.

\section{Performance of double circulant and double twistulant codes}
\label{sec:DCC}

\subsection{Double circulant and double twistulant codes}

An $n \times n$ matrix is {\em circulant} or {\em negacirculant} if it has the form
\[
\left( \begin{array}{ccccc}
r_0     &r_1     & \cdots &r_{n-2}&r_{n-1} \\
cr_{n-1}&r_0     & \cdots &r_{n-3}&r_{n-2} \\
cr_{n-2}&cr_{n-1}& \ddots &r_{n-4}&r_{n-3} \\
\vdots  & \vdots &\ddots& \ddots & \vdots \\
cr_1    &cr_2    & \cdots&cr_{n-1}&r_0
\end{array}
\right),
\]
where $c=1$ or $-1$, respectively.
A {\em pure double circulant} code and a {\em bordered
double circulant} code have generator matrices of the form
\begin{equation}\label{eq:pDCC}
\left(\begin{array}{ccccc}
{} & I_n  & {} & R & {} \\
\end{array}\right)
\end{equation}
and
\begin{equation}\label{eq:bDCC}
\left(\begin{array}{ccccccccc}
{} & {} & {}      & {} & {} & \alpha & \beta  & \cdots & \beta  \\
{} & {} & {}      & {} & {} & \gamma     & {} & {}     &{} \\
{} & {} & I_{n} & {} & {} &\vdots & {} & R'     &{} \\
{} & {} & {}      & {} & {} & \gamma     & {} &{}      &{} \\
\end{array}\right),
\end{equation}
respectively,
where $I_n$ is the identity matrix of order $n$,
$R$ (resp.\ $R'$) is an $n \times n$ (resp.\ $n-1 \times n-1$)
circulant matrix, and $\alpha,\beta,\gamma \in \FF_3$.
These two families are called {\em double circulant} codes.

A classification of double circulant codes with the largest
minimum weight among all double circulant codes (including
self-dual codes)
was given in~\cite{DGH} for lengths up to $14$.
For lengths $n$ with $16 \le n \le 30$,
the largest minimum weight among all double circulant codes (including
self-dual codes) was determined in~\cite{DGH}.
For lengths $n$ with $16 \le n \le 24$,
the weight distributions for double circulant codes with the largest minimum weight
were determined and a double circulant code was given for each weight
distribution~\cite{DGH}.

A $[2n,n]$ code which has a generator matrix of the form
\begin{equation}\label{eq:QT}
\left(\begin{array}{cccccc}
{} & I_n  &  & & N & {} \\
\end{array}\right),
\end{equation}
where $N$ is an $n \times n$ negacirculant matrix, is called a {\em double twistulant} code.
Although the following proposition is somewhat trivial,
we provide it for the sake of completeness.

\begin{prop}
A double twistulant $[2n,n]$ code with generator matrix~\eqref{eq:QT} is isodual.
\end{prop}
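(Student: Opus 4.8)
The plan is to write down an explicit generator matrix for $C^\perp$ and then exhibit a single $(1,-1,0)$-monomial matrix $M$ with $\{cM \mid c\in C\}=C^\perp$. Since $C$ has generator matrix $G=(I_n\mid N)$, a routine check shows that $H=(-N^{\top}\mid I_n)$ satisfies $GH^{\top}=0$ and has full rank $n$ (because of its $I_n$ block), so $H$ generates $C^\perp$. It therefore suffices to produce a monomial $M$ for which $(I_n\mid N)M$ generates the same code as $(-N^{\top}\mid I_n)$.

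The key structural fact is that the transpose of a negacirculant matrix is again negacirculant. To see this cleanly I would identify $\FF_3^n$ with the ring $\FF_3[x]/(x^n+1)$, under which the negacirculant matrix $N$ with first row $(r_0,\ldots,r_{n-1})$ acts as multiplication by $f(x)=\sum_{i=0}^{n-1} r_i x^i$. The assignment $x\mapsto x^{-1}=-x^{n-1}$ extends to a ring automorphism $\phi$ that fixes $1$ and sends $x^i\mapsto -x^{n-i}$ for $1\le i\le n-1$; as a linear map $\phi$ is a $(1,-1,0)$-monomial matrix, and since applying $x\mapsto x^{-1}$ twice is the identity we have $\phi^{-1}=\phi$. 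Because transposition corresponds to $f(x)\mapsto f(x^{-1})=\phi(f)$ and $\phi$ respects products, conjugation yields the identity $\phi N\phi^{-1}=N^{\top}$ for every negacirculant $N$. Verifying this identity with every sign in place is the one step I expect to require genuine care; everything else is bookkeeping.

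With this in hand I would take the block monomial matrix
\[
M=\begin{pmatrix} 0 & \phi \\ -\phi & 0 \end{pmatrix},
\]
which is a $(1,-1,0)$-monomial matrix of order $2n$. A direct computation gives $(I_n\mid N)M=(-N\phi\mid \phi)$, and left-multiplying by the invertible matrix $\phi$ — an elementary row operation that does not change the row space — produces $(-\phi N\phi\mid \phi^2)=(-N^{\top}\mid I_n)$, which is precisely the generator matrix $H$ of $C^\perp$. Hence $\{cM\mid c\in C\}=C^\perp$, so $C$ and $C^\perp$ are equivalent and $C$ is isodual. The only substantive ingredient is the negacirculant transpose identity $\phi N\phi^{-1}=N^{\top}$; once it is established, the monomial $M$ together with the single row reduction finishes the argument immediately.
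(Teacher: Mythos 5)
Your proof is correct and takes essentially the same route as the paper: the paper's key observation, that $N$ is obtained from $N^T$ by interchanging the $i$-th and $(n+2-i)$-th rows and columns and negating the first row and column, is exactly your monomial map $\phi$ (the automorphism $x \mapsto x^{-1}$ of $\FF_3[x]/(x^n+1)$), so your identity $\phi N \phi^{-1}=N^T$ is the paper's key fact in algebraic form. You merely make explicit what the paper compresses into ``the result follows,'' namely that $C^\perp$ is generated by $(-N^T \mid I_n)$ and that a single block monomial matrix $M$ carries $C$ onto $C^\perp$.
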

\begin{proof}
The negacirculant matrix $N$ is obtained from $N^T$
by interchanging the $i$-th row (resp.\ column)
with the $(n+2-i)$-th row (resp.\ column)
($i=2,3,\ldots,\lfloor (n+1)/2\rfloor$),
and by negating the first row and column,
where $N^T$ denotes the transpose of $N$.
Hence, two codes with generator matrices
$
\left(\begin{array}{cc}
I_n  & N \\
\end{array}\right)
$
and
$
\left(\begin{array}{cc}
I_n  & N^T  \\
\end{array}\right)
$
are equivalent, and the result follows.
\end{proof}

In this section, we focus on double circulant codes and double twistulant codes
as a remarkable class of isodual codes.
We consider codes $C$ satisfying the following conditions:
\begin{itemize}
\item[(C1)]
$C$ is a pure or bordered double circulant codes or double twistulant codes of length
$n\ (\equiv 0 \pmod 4)$ which is not self-dual with the largest minimum weight $d_{P}$, $d_{B}$ and $d_{T}$
among pure or bordered double circulant codes or double twistulant codes
of length $n$ which are not self-dual, respectively.
\item[(C2)]
$C$ has the smallest weight distribution
$(A_0,A_1,\ldots,A_n)$ under the lexicographic order $\prec$
among pure or bordered double circulant codes, or
double twistulant codes of length $n$ and
minimum weight $d_{P}$, $d_{B}$ and $d_{T}$
which are not self-dual, respectively.
\end{itemize}

We say that a double circulant
(resp.\ double twistulant) code which is not self-dual
is {\em optimal} if it has the largest minimum weight among all
double circulant (resp.\ double twistulant)
codes of that length which are not self-dual.

\subsection{Performance of double circulant and double twistulant codes}

For $4m \le 48$, by determining the largest minimum weights $d_{P}$ (resp.\ $d_{B}$),
our exhaustive search found all distinct
pure (resp.\ bordered) double circulant $[4m,2m]$ codes
satisfying conditions (C1) and (C2).
This was done by considering all
$2m \times 2m$ circulant matrices $R$ in~\eqref{eq:pDCC}
(resp.\ $2m-1 \times 2m-1$ circulant matrices $R'$ in~\eqref{eq:bDCC}).
In addition, for $4m \le 48$,
by determining the largest minimum weights $d_{T}$,
our exhaustive search found all distinct double twistulant
$[4m,2m]$ codes satisfying conditions (C1) and (C2).
This was done by considering all
$2m \times 2m$ negacirculant matrices $N$ in~\eqref{eq:QT}.
Since a cyclic shift of the first row for a
code defines an equivalent code,
the elimination of cyclic shifts
substantially reduced the number of codes
which had to be checked further for equivalence to
complete the classification.
Then {\sc Magma}~\cite{Magma} was employed to determine code
equivalence which completed the classification for $4m < 48$.

In Table~\ref{Tab:Res}, we list the values
$d_{P}$, $A_{d_{P}}$,
$d_{B}$, $A_{d_{B}}$,
$d_{T}$ and $A_{d_{T}}$.
We also list the inequivalent pure and bordered double circulant codes, and double twistulant codes
satisfying conditions (C1) and (C2).
For the codes listed in the table,
the first rows of $R$ in~\eqref{eq:pDCC},
$R'$ in~\eqref{eq:bDCC} and
$N$ in~\eqref{eq:QT}
are given in Tables~\ref{Tab:P}, \ref{Tab:B} and \ref{Tab:Q}, 
respectively.
The border values $(\alpha,\beta,\gamma)$ in~\eqref{eq:bDCC}
are listed for the bordered double circulant codes.
In addition, 
the minimum weights $d$ and $(A_d,A_{d+1},A_{d+2})$ are listed.
As mentioned above, for lengths $n$ with $16 \le n \le 24$,
all weight distributions for
double circulant codes with the largest minimum weight
were determined (\cite[Table~6]{DGH})
and a pure and bordered double circulant code was
given for each weight distribution (\cite[Tables~4 and 5]{DGH}).
The bordered double circulant code $B_{16}$ in Table~\ref{Tab:B}
has the following weight distribution
\begin{multline*}
A_0=1,
A_6=  84,
A_7= 336,
A_8= 420,
A_9= 872,
A_{10}=1092,
\\
A_{11}=1680,
A_{12}= 924,
A_{13}= 840,
A_{14}= 168,
A_{15}= 144.
\end{multline*}
Since this weight distribution was not given in~\cite[Table~6]{DGH},
the code $B_{16}$ should be added to~\cite[Table~5]{DGH}.


To compare the performance of the optimal double circulant and double twistulant codes
which are not self-dual as measured by the decoding error probability with
bounded distance decoding, we list the largest minimum weight $d_{SD}$
and the smallest number $A_{SD}$ of codewords of weight $d_{SD}$ among self-dual codes of length $n$.
It was shown in~\cite{MS73} that the minimum weight $d$ of a self-dual code of length $n$ is bounded by
$d\leq 3 \lfloor n/12 \rfloor +3$.
If $d=3 \lfloor n/12 \rfloor+3$, then the code is called {\em extremal}.
The weight distribution of an extremal self-dual code of length $n$ is uniquely determined (see~\eqref{eq:WE}).
For lengths up to $64$, the existence of an extremal self-dual code is known (see~\cite[Table~4]{GG}).
Hence, $d_{SD}$ and $A_{SD}$ in Table~\ref{Tab:Res} are uniquely determined for each length.

For the cases $d_P > d_{SD}$, $d_B > d_{SD}$ and
$d_T > d_{SD}$\footnote{These cases are marked by $*$ in
columns $d_P$, $d_B$ and $d_T$ of Table~\ref{Tab:Res}.},
we list in Table~\ref{Tab:Res}
the inequivalent pure and bordered
double circulant codes, and double twistulant codes
$C$ with minimum weight $d_{SD}$ satisfying that
$C$ has the smallest weight distribution
among pure and bordered double circulant codes, and
double twistulant codes of length $n$ and
minimum weight $d_{SD}$ which are not self-dual.

From Table~\ref{Tab:Res}, we have the following results concerning the
performance of double circulant and double twistulant codes which are not self-dual.

\begin{thm}\label{thm}
Suppose that
\begin{align*}
(n,d) =& (8,3), (16,6), (20,6), (28,9), (32,9), (44,12).
\end{align*}
Then there is a double circulant
$[n,n/2,d]$ code $C$ which is not self-dual
and a double twistulant
$[n,n/2,d]$ code $C$ which is not self-dual
such that $C$ performs better than any self-dual $[n,n/2,d]$ code.
\end{thm}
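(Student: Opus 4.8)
The plan is to establish the theorem by exhibiting, for each of the six listed pairs $(n,d)$, explicit double circulant and double twistulant codes whose weight distributions are lexicographically smaller than that of every self-dual code of the same parameters. Since the statement reduces, via \cite[Theorem~3.4]{FLOW} quoted as~\eqref{eq:WD}, to a comparison of weight distributions under $\prec$, the entire proof is a finite verification: for each target pair I must compare the minimal weight distribution achievable by a (non-self-dual) double circulant or double twistulant $[n,n/2,d]$ code against the minimal weight distribution achievable by any self-dual $[n,n/2,d]$ code.

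First I would locate, for each $(n,d)$ in the list, the relevant entries of Table~\ref{Tab:Res}. For the double circulant and double twistulant side, the codes satisfying conditions (C1) and (C2) provide precisely the lexicographically smallest weight distributions among non-self-dual codes of length $n$ and the largest attainable minimum weight; where $d$ equals $d_{SD}$ one uses instead the specially tabulated codes of minimum weight $d_{SD}$. In each case the explicit code is recorded through the first rows of $R$, $R'$, or $N$ given in Tables~\ref{Tab:P}, \ref{Tab:B}, and~\ref{Tab:Q}, together with the leading coefficients $(A_d,A_{d+1},A_{d+2})$ of its weight distribution. On the self-dual side, $d_{SD}$ and $A_{SD}$ are the largest minimum weight and the smallest number of minimum-weight codewords among self-dual codes of length $n$, both already listed in Table~\ref{Tab:Res} and, for the short lengths here, uniquely determined.

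The core comparison then runs as follows. For each pair I read off the initial segment $(A_d,A_{d+1},\dots)$ of the double circulant or double twistulant code and compare it termwise, starting from index $d$, against the corresponding segment for self-dual codes. To conclude $C \prec C'$ I exhibit the first index $s$ at which the two distributions differ and verify that the double circulant or twistulant code has the strictly smaller value there; the lexicographic definition in~\eqref{eq:WD} then yields that $C$ performs better. Because for these lengths both the optimal non-self-dual distributions (from the exhaustive search) and the self-dual distributions are completely known, each comparison is a single inequality between explicitly tabulated integers, and establishing the \emph{first} point of difference is routine once the two weight distributions are written out.

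The main obstacle is not conceptual but bookkeeping: one must confirm that for each of the six pairs a non-self-dual double circulant code \emph{and} a non-self-dual double twistulant code both exist with the stated parameters and both beat every self-dual code, so the verification must be carried out separately for the two families at each length. The only subtlety is ensuring that in cases where the tabulated minimum weight $d_P$, $d_B$, or $d_T$ strictly exceeds $d_{SD}$, one invokes the correct auxiliary codes of minimum weight exactly $d_{SD}$ rather than the optimal ones; in that situation the non-self-dual code already has a strictly larger minimum weight, so $A_i = 0$ for all $i < d_{SD}$ while the self-dual code has $A_{d_{SD}} = A_{SD} > 0$, giving $s = d_{SD}$ immediately. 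Thus once the tables are in hand, the theorem follows by direct inspection.
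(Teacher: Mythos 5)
Your proposal is correct and coincides with the paper's own (implicit) proof, which consists precisely of reading off Table~\ref{Tab:Res}: for each of the six pairs $(n,d)$ one has $d=d_{SD}$, both distributions vanish below weight $d$, and the tabulated values $A_{d_P}$, $A_{d_B}$, $A_{d_T}$ (or those of the starred auxiliary codes $P'_n$, $B'_n$, $T'_n$ when the optimal minimum weight exceeds $d_{SD}$) are strictly smaller than $A_{SD}$, so $s=d$ in the lexicographic comparison. One small slip worth noting: in your final paragraph the justification ``$A_i=0$ for all $i< d_{SD}$, giving $s=d_{SD}$ immediately'' describes the \emph{optimal} codes of minimum weight exceeding $d_{SD}$ (which do not qualify as $[n,n/2,d]$ codes), whereas for the auxiliary codes you correctly insist on using, the required check is instead the tabulated inequality $A_{d_{SD}}(C) < A_{SD}$ (e.g.\ $64 < 960$ for $P'_{32}$), which your earlier ``core comparison'' paragraph already covers.
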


\begin{rem}
For $n=8,20,32,44$ (resp.\ $n=8,20,44$),
there is a double circulant (resp.\ double twistulant) code
$C$ of length $n$ which is not self-dual such that
$C$ has a larger minimum weight than any self-dual code of length $n$.
\end{rem}

\begin{rem}
For length $48$, we verified that $d_P=12, d_B=14, d_T=12$.
Also we verified that there are three inequivalent
bordered double circulant $[48,24,14]$ codes $B_{48,i}$ $(i=1,2,3)$
which are not self-dual.  
For the three codes, the first rows of $R'$ and
the border values $(\alpha,\beta,\gamma)$ in~\eqref{eq:bDCC}
are also given in Table~\ref{Tab:B}.
\end{rem}

\begin{table}[thbp]
\caption{Pure double circulant codes}
\label{Tab:P}
\begin{center}
{\footnotesize
\begin{tabular}{c|r|c|c}
\noalign{\hrule height1pt}
Code &  \multicolumn{1}{c|}{First row} & $d$ & $(A_d,A_{d+1},A_{d+2})$\\
\hline
$P_{4,1}$ & (11)& 2& $(2, 4, 2)$\\
$P_{4,2}$ & (12)& 2& $(2, 4, 2)$\\
$P_{12,1}$ & (122010)& 5& $(48, 98, 132 )$\\
$P_{12,2}$ & (111201)& 5& $(48, 98, 132 )$\\
$P_{24}$ & (110020021021)& 8& $(348, 1776, 3912)$\\
$P_{28}$ & (11100121001121) & 9& $(924,3220,9996)$ \\
$P_{32,1}$ & (1201110000101101) &10& $(2208,8832,7728)$ \\
$P_{32,2}$ & (1221022111112212) &10& $(2208,8832,7728)$ \\
$P_{36}$ & (112202210120102222) &10& $(270,3636,15042)$ \\
$P_{40}$ & (10122212102112102100) &11& $(720,8120,29440)$ \\
$P_{44,1}$ & (2021101121112021101000) &13& $(19712,87296, 87296)$ \\
$P_{44,2}$ & (1112101101011001100000) &13& $(19712,87296, 87296)$ \\
\hline
$P'_{8}$ & (1100) &6& $(8,10,16)$ \\
$P'_{20}$ & (1200112220) &6& $(10, 180, 680)$ \\
$P'_{32}$ & (1021022000211011) &9& $(64,1600,7616)$ \\
$P'_{44}$ & (1211112022021010110000) &12& $(1716,15752,65120)$ \\
\noalign{\hrule height1pt}
\end{tabular}
}
\end{center}
\end{table}

\begin{table}[thbp]
\caption{Bordered double circulant codes}
\label{Tab:B}
\begin{center}
{\footnotesize
\begin{tabular}{c|r|c|c|c}
\noalign{\hrule height1pt}
Code &  \multicolumn{1}{c|}{First row} &$(\alpha,\beta,\gamma)$
& $d$ & $(A_d,A_{d+1},A_{d+2})$\\
\hline
$B_{ 4}$ & (2)     &$(0,1,1)$ & 4 & $(2,4,2)$\\
$B_{ 8}$ & (112)   &$(0,1,1)$ & 4 & $(22,24,20)$\\
$B_{12}$ & (22101) &$(1,2,2)$ & 5 & $(30, 162, 72)$\\
$B_{16}$&(2110100) &$(1,1,1)$ & 6 & $(84,336,420)$ \\
$B_{24,1}$ &(11102122021)& $(2,2,2)$ & 8 &$(264,2794,990)$\\
$B_{24,2}$ &(11121021222)& $(2,2,2)$ & 8 &$(264,2794,990)$\\
$B_{24,3}$ &(21200112221)& $(0,2,2)$ & 8 &$(264,2794,990)$\\
$B_{28}$ & (1102202200222)&$(1,1,1)$ & 9 &$(832,3536,9880)$\\
$B_{32}$ & (222011121020010)&$(0,2,2)$ & 9 &$(60,1870,6876)$\\
$B_{36}$ & (11000101121101000) &$(0,1,1)$  & 11 &$(2244,30804,9792)$\\
$B_{40}$ & (1012021012200110000) &$(1,1,1)$  & 11 &$(722,7790,31084)$\\
$B_{44}$& (120111201121200101100) &$(0,1,1)$ & 12 &$(2436,15470,61278)$\\
$B_{48,1}$& (11202002011021101001000) &$(0,1,1)$ & 14 &$(19320, 304704, 91080)$\\
$B_{48,2}$& (21111010110011001010000) &$(0,1,1)$ & 14 &$(19320, 304704, 91080)$\\
$B_{48,3}$& (12011112120211110001000) &$(0,1,1)$ & 14 &$(19320, 304704, 91080)$\\
\hline
$B'_8$ & (102)&$(2,2,2)$ & 3 &$(2,18,26)$\\
$B'_{20}$ & (112021000)&$(0,2,2)$ & 6 &$(6, 216, 594)$\\
\noalign{\hrule height1pt}
\end{tabular}
}
\end{center}
\end{table}

\begin{table}[thbp]
\caption{Double twistulant codes}
\label{Tab:Q}
\begin{center}
{\footnotesize
\begin{tabular}{c|r|c|c}
\noalign{\hrule height1pt}
Code &  \multicolumn{1}{c|}{First row}
& $d$ & $(A_d,A_{d+1},A_{d+2})$\\
\hline
$T_{4 }$ & (10)     & 2 & $(4,0,4)$\\
$T_{8 }$ & (1201)     & 4 & $(24,16,32)$\\
$T_{12}$ & (121010)     & 5 & $(8,96,144)$\\
$T_{16}$ & (10021102)     & 6 & $(96,288,496)$\\
$T_{20,1}$ & (1101001011)     & 7 & $(200,680,1560)$\\
$T_{20,2}$ & (1012220001)     & 7 & $(200,680,1560)$\\
$T_{20,3}$ & (1110020021)     & 7 & $(200,680,1560)$\\
$T_{24}$ & (111221210220)     & 8 & $(312,1928,3696)$\\
$T_{28}$ & (12211210012220)   & 9 & $(616,4200,9632)$\\
$T_{32}$ & (1011122111110100)   & 9 & $(32,1856,7360)$\\
$T_{36}$ & (222120202011110000)   & 10 & $(252,3816, 14868)$\\
$T_{40}$ & (12020112202100100000)   & 11 & $(480,10800,24160)$\\
$T_{44}$ & (1120222200211010001000)   & 13 & $(19712,87296,87296)$\\
\hline
$T'_{8,1}$ & (1120)     & 3 &$(8,8,24)$\\
$T'_{8,2}$ & (1122)     & 3 &$(8,8,24)$\\
$T'_{20}$ & (1201221021)     & 6 &$(20,140,780)$\\
$T'_{44}$ & (1112121111221221110100) & 12 &$(1716, 15752, 65120)$\\
\noalign{\hrule height1pt}
\end{tabular}
}
\end{center}
\end{table}

\section{Performance of self-dual codes}
\label{sec:SD}

In this section, we investigate the performance of
self-dual codes of length $n$ and minimum weight $3\lfloor n/12 \rfloor$
for $n=68,72,76,80,84,88,92,96$.

\subsection{Largest minimum weights}

As mentioned above, the minimum weight
$d$ of a self-dual code of length $n$ is bounded by
$d\leq 3 \lfloor n/12 \rfloor +3$~\cite{MS73}, and
a self-dual code with $d=3 \lfloor n/12 \rfloor +3$
is called {\em extremal}.
We say that a self-dual code of length $n$
is {\em optimal} if it has the largest minimum weight among all
self-dual codes of that length.
Of course, an extremal self-dual code is optimal.

The weight enumerator of a code of length $n$ is
defined as $\sum_{i=0}^n A_i y^i$.
The weight enumerator $W$ of a self-dual code of length $n$
can be represented as an integral combination of Gleason polynomials
(see~\cite{MS73}), so that
\begin{equation}\label{eq:WE}
W= \sum_{j=0}^{\lfloor n/12 \rfloor}a_j(1+y^3)^{n/4-3j}(y^3(1-y^3)^3)^j,
\end{equation}
for some integers $a_j$ with $a_0=1$.
Since the weight enumerator of an extremal self-dual code
of length $n$ is uniquely determined, all extremal self-dual
codes of length $n$ have the same performance.
Note that the weight enumerator of a self-dual
$[n,n/2,3\lfloor n/12 \rfloor]$ code
can be expressed using a single integer variable.

For lengths up to $64$, the existence of an extremal self-dual code
is known (see~\cite[Table~4]{GG}).
It is also known that there is no extremal self-dual
code for lengths $72$ and $96$,
and that there are self-dual codes with parameters
$[72,36,18]$ and $[96,48,24]$.
In addition, the largest minimum weight among currently known self-dual
codes of length $n$ is $3\lfloor n/12 \rfloor$
for $n=68,76,80,84,88,92$.
This is a reason for investigating the performance of
self-dual codes of length $n$ and minimum weight $3\lfloor n/12 \rfloor$
for $n=68,72,76,80,84,88,92,96$.

\subsection{Methods for constructing self-dual codes}

Many self-dual codes with large minimum weights
were constructed as double circulant codes
and double twistulant codes~\cite{GG}.
In this section, the following two methods for
constructing self-dual codes are employed.

Let $C$ and $D$ be self-dual codes of lengths $m$ and $n$,
respectively, where $n/2 \ge m$.
Let $d_1$ be the maximum weight among the codewords of $C$,
and let $d$ be the minimum weight of $D$.
Suppose that the set of the first $m$ coordinates $\Gamma_D$ of $D$
is a subset of some information set.
Let $E$ be the code consisting of all vectors
$y \in \FF_3^{n-m}$ such that $(x,y) \in D$ for some $x \in C$.
Then $E$ is a self-dual $[n-m,(n-m)/2]$ code with
minimum weight at least $d-d_1$~\cite{CPS79}.
By considering the other $m$ coordinates $\Gamma_D$,
many self-dual codes can be constructed.
We say that these self-dual codes of length $n-m$ are
constructed from $D$ by subtracting $C$.
In this section, we consider self-dual codes of length $n-4$
from a self-dual $[n,n/2,3\lfloor n/12 \rfloor]$ code
by subtracting the unique self-dual $[4,2,3]$ code $e_4$
for $n=72,84,96$.

A {\em four-negacirculant} $[4n,2n]$ code has
a generator matrix of the form
\begin{equation}\label{eq:G}
\left(
\begin{array}{ccc@{}c}
\quad & {\Large I_{2n}} & \quad &
\begin{array}{cc}
A & B \\
-B^T & A^T
\end{array}
\end{array}
\right),
\end{equation}
where $A$ and $B$ are negacirculant matrices.
Many extremal self-dual codes are four-negacirculant codes~\cite{HHKK}.
In this section, we use this construction to obtain self-dual codes with
minimum weight $3\lfloor n/12 \rfloor$ for $n=68,72,76,80,92$.

\subsection{Self-dual $[68,34,15]$ codes}
Using~\eqref{eq:WE}, the weight enumerator of a self-dual $[68,34,15]$ code is
\begin{align*}
&
1 + a y^{15}
+ (596904 + a) y^{18}
+ (70982208 - 71 a) y^{21}
\\ &
+ (4537453680 + 265 a) y^{24}
+ (164380156864 + 805 a) y^{27}
\\ &
+ (3452859764640 - 10283 a) y^{30}
+ \cdots
+ (30394368 - 64 a)y^{66},
\end{align*}
where $a$ is an integer with $1 \le a \le 474912$.
A self-dual $[68,34,15]$ code can be constructed from
the extended quadratic residue code of length $72$
by subtracting $e_4$.
In this way, we found self-dual $[68,34,15]$ codes with weight distributions where
\begin{align*}
A_{15} =&
3592, 3624, 3628, 3632, 3652, 3696, 3708, 3712, 3728, 3732, 3736, 3768,
\\ &
3772, 3776, 3796, 3808, 3840, 3852, 3856, 3872, 3876, 3912, 3916, 3920,
\\ &
3940, 3952, 3984, 3996, 4000, 4016, 4020, 4064, 4168, 4200, 4208, 4272,
\\ &
4304, 4384.
\end{align*}
A self-dual $[68,34,15]$ code can be found in~\cite[Table~3]{GO}.
A double twistulant self-dual $[68,34,15]$ code can be found in~\cite[Table~3]{GG}.
We verified by {\sc Magma}~\cite{Magma} that these codes have $A_{15}=1224$ and $3128$, respectively.

\begin{table}[thb]
\caption{Four-negacirculant self-dual $[68,34,15]$ codes}
\label{Tab:68}
\begin{center}
{\footnotesize
\begin{tabular}{c|c|c}
\noalign{\hrule height1pt}
Code & $(r_A,r_B)$ & $A_{15}$ \\
\hline
$C_{68, 1}$&$((12211110002000221),(11020202120022121))$ & 1088 \\
$C_{68, 2}$&$((02100210211122010),(22122110221200002))$ & 1428 \\
$C_{68, 3}$&$((02100210211122010),(11112122100102002))$ & 1496 \\
$C_{68, 4}$&$((02100210211122010),(12222021011101002))$ & 1564 \\
$C_{68, 5}$&$((02100210211122010),(12102010210100002))$ & 1632 \\
$C_{68, 6}$&$((02100210211122010),(10120210101200002))$ & 1700 \\
$C_{68, 7}$&$((02100210211122010),(12011201122110002))$ & 1768 \\
$C_{68, 8}$&$((02100210211122010),(01202021220200002))$ & 1836 \\
$C_{68, 9}$&$((00102220000220100),(12121100111010122))$ & 1904 \\
$C_{68,10}$&$((12211110002000221),(10022100222022121))$ & 1972 \\
$C_{68,11}$&$((02100210211122010),(10211100201200002))$ & 2040 \\
$C_{68,12}$&$((02100210211122010),(11110200221000002))$ & 2244 \\
$C_{68,13}$&$((02100210211122010),(01110022120100002))$ & 2312 \\
$C_{68,14}$&$((00102220000220100),(12112010121010122))$ & 2380 \\
$C_{68,15}$&$((02100210211122010),(01001002200100002))$ & 2516 \\
$C_{68,16}$&$((00102220000220100),(10110110001010122))$ & 2584 \\
$C_{68,17}$&$((00102220000220100),(12112210011010122))$ & 2652 \\
$C_{68,18}$&$((12211110002000221),(02202210102022121))$ & 2856 \\
$C_{68,19}$&$((12211110002000221),(00022021122022121))$ & 3196 \\
$C_{68,20}$&$((00102220000220100),(01111202221010122))$ & 3468 \\
\noalign{\hrule height1pt}
\end{tabular}
}
\end{center}
\end{table}

By considering four-negacirculant codes, we found new
self-dual $[68,34,15]$ codes $C_{68,i}$ $(i=1,2,\ldots,20)$.
The first rows $r_A$ and $r_B$
of negacirculant matrices $A$ and $B$ in~\eqref{eq:G} are
listed in Table~\ref{Tab:68}.
The numbers $A_{15}$ for these codes are also listed in the table.
Hence, $C_{68,1}$ performs better than the above self-dual codes constructed by
subtraction, the two codes in~\cite[Table~3]{GO}, \cite[Table~3]{GG}
and $C_{68,i}$ $(i=2,3,\ldots,20)$.

\subsection{Optimal self-dual $[72,36,18]$ codes}
Using~\eqref{eq:WE}, the weight enumerator of an optimal self-dual $[72,36,18]$ code is
\begin{align*}
&
1 + a y^{18}
+ (36213408 - 18 a) y^{21}
+ (2634060240 + 153 a) y^{24}
\\ &
+ (126284566912 - 816 a) y^{27}
+ (3525613242624 + 3060 a) y^{30}
\\ &
+ (59358705673680 - 8568 a) y^{33}
+ \cdots
+ (- 115728 + a)y^{72},
\end{align*}
where $a$ is an integer with $115728 \le a \le 2011856$.
The extended quadratic residue code of length $72$ is a self-dual
code with $d=18$ and $A_{18}=357840$~\cite{GNW}.
A double twistulant self-dual $[72,36,18]$ code can be found in~\cite[Table~3]{GG}.
We verified by {\sc Magma}~\cite{Magma} that this code has $A_{18}=213936$.

\begin{table}[thb]
\caption{Four-negacirculant self-dual $[72,36,18]$ codes}
\label{Tab:72}
\begin{center}
{\footnotesize
\begin{tabular}{c|c|c}
\noalign{\hrule height1pt}
Code & $(r_A,r_B)$ & $A_{18}$ \\
\hline
$C_{72,1}$&$((012102100100020021),(022101002101002112))$&205464\\
$C_{72,2}$&$((210121111222022000),(200222220120220212))$&209184\\
$C_{72,3}$&$((111022210021122000),(001101111111000012))$&209736\\
$C_{72,4}$&$((001121111020012112),(221221012112221110))$&210456\\
$C_{72,5}$&$((100111020012120220),(020200102102020022))$&212280\\
$C_{72,6}$&$((002020000222220002),(102201201102022210))$&212376\\
$C_{72,7}$&$((201201222122110010),(120012101020201100))$&213456\\
$C_{72,8}$&$((010201112111021012),(022222102122021200))$&213648\\
$C_{72,9}$&$((012201211210110112),(220022210111120001))$&213744\\
$C_{72,10}$&$((010202021120102002),(021020222222020112))$&214992\\
\noalign{\hrule height1pt}
\end{tabular}
}
\end{center}
\end{table}

By considering four-negacirculant codes, we found new
self-dual $[72,36,18]$ codes $C_{72,i}$ $(i=1,2,\ldots,10)$.
The first rows $r_A$ and $r_B$ of the negacirculant matrices $A$ and $B$ in~\eqref{eq:G} are
listed in Table~\ref{Tab:72}.
The numbers $A_{18}$ for these codes are also listed in the table.
Hence, $C_{72,1}$ performs better than
the two previously known codes and $C_{72,i}$ $(i=2,3,\ldots,10)$.

\subsection{Self-dual $[76,38,18]$ codes}
Using~\eqref{eq:WE}, the weight enumerator of a self-dual $[76,38,18]$ code is
\begin{align*}
&
1 + a y^{18}
+ (14228720 - 10 a) y^{21}
+ (1403328600 + 9 a) y^{24}
\\ &
+ (84823417600 + 408 a) y^{27}
+ (3080650381440 - 3468 a) y^{30}
\\ &
+ (68562946755000 + 15912 a) y^{33}
+ \cdots
+ (5820992 + 8 a) y^{75},
\end{align*}
where $a$ is an integer with $1 \le a \le 1422872$.
A self-dual $[76,38,18]$ code can be found in~\cite[Table~3]{GO}, and
a double twistulant self-dual $[76,38,18]$ code can be found in~\cite[Table~3]{GG}.
We verified by {\sc Magma}~\cite{Magma}
that these codes have $A_{18}=71136$ and $75088$, respectively.

\begin{table}[thb]
\caption{Four-negacirculant self-dual $[76,38,18]$ codes}
\label{Tab:76}
\begin{center}
{\footnotesize
\begin{tabular}{c|c|c}
\noalign{\hrule height1pt}
Code & $(r_A,r_B)$ & $A_{18}$ \\
\hline
$C_{76, 1}$ & $((1201011221200201200),(1110000120120012120))$ & 65436 \\
$C_{76, 2}$ & $((1112120212012212011),(1220022211120100012))$ & 65968 \\
$C_{76, 3}$ & $((1012111100001222101),(1202111220002221200))$ & 67868 \\
$C_{76, 4}$ & $((1012000220122101201),(1021212120111210100))$ & 68096 \\
$C_{76, 5}$ & $((2112210110201220011),(1121212202120112010))$ & 68628 \\
$C_{76, 6}$ & $((1012221221202111020),(2000000221110021000))$ & 68704 \\
$C_{76, 7}$ & $((1012000220122101201),(0022221101202200100))$ & 69844 \\
$C_{76, 8}$ & $((1112120212012212011),(2020102100210100012))$ & 69996 \\
$C_{76, 9}$ & $((1012000220122101201),(1110001022221200100))$ & 70376 \\
$C_{76,10}$ & $((1201011221200201200),(0201020102110012120))$ & 70452 \\
$C_{76,11}$ & $((1012111100001222101),(2212020011021121200))$ & 70604 \\
$C_{76,12}$ & $((1012000220122101201),(0101221120212000100))$ & 70832 \\
$C_{76,13}$ & $((1201011221200201200),(2211220201001112120))$ & 70908 \\
$C_{76,14}$ & $((1012000220122101201),(0000220211200200100))$ & 70984 \\
$C_{76,15}$ & $((1201011221200201200),(1111020012022212120))$ & 71212 \\
$C_{76,16}$ & $((2112210110201220011),(1220020022111012010))$ & 71364 \\
$C_{76,17}$ & $((1201011221200201200),(0101120021112212120))$ & 71668 \\
$C_{76,18}$ & $((1112120212012212011),(0122201111011100012))$ & 71744 \\
$C_{76,19}$ & $((1012111100001222101),(2212220200102121200))$ & 72200 \\
$C_{76,20}$ & $((2112210110201220011),(1121010100211012010))$ & 73340 \\
\noalign{\hrule height1pt}
\end{tabular}
}
\end{center}
\end{table}

By considering four-negacirculant codes, we found new
self-dual $[76,38,18]$ codes $C_{78,i}$ $(i=1,2,\ldots,20)$.
The first rows $r_A$ and $r_B$
of negacirculant matrices $A$ and $B$ in~\eqref{eq:G} are
listed in Table~\ref{Tab:76}.
The numbers $A_{18}$ for these codes are also listed in the table.
Hence, $C_{76,1}$ performs better than
the two previously known codes and $C_{76,i}$ $(i=2,3,\ldots,20)$.

\subsection{Self-dual $[80,40,18]$ codes}
Using~\eqref{eq:WE}, the weight enumerator of a self-dual $[80,40,18]$ code is
\begin{align*}
&
1 + a y^{18}
+ (5262400 - 2 a) y^{21}
+ (673223200 - 71 a) y^{24}
\\ &
+ (50911463680 + 480 a) y^{27}
+ (2350521997824 - 204 a) y^{30}
\\ &
+ (67551815604000 - 11832 a) y^{33}
+ \cdots
+ (234280960 + 64 a) y^{78},
\end{align*}
where $a$ is an integer with $1 \le a \le 2631200$.
A self-dual $[80,40,18]$ code can be constructed from
the extended quadratic residue code 
of length $84$ by subtracting $e_4$.
In this way, we found self-dual $[80,40,18]$ codes with
weight distributions where
\begin{align*}
A_{18} =&
25400, 25444, 25488, 25508, 25528, 25544, 25552, 25576, 25592,
\\&
25596, 25604, 25616, 25636, 25652, 25660, 25672, 25676, 25684,
\\&
25688, 25700, 25712, 25720, 25724, 25744, 25764, 25780, 25784,
\\&
25792, 25796, 25808, 25820, 25828, 25832, 25852, 25872, 25888,
\\&
25892, 25904, 25920, 25960, 25976, 25980, 26152, 26176.
\end{align*}
A self-dual $[80,40,18]$ code can be found in~\cite[Table~3]{GO}.
A double twistulant self-dual $[80,40,18]$ code can be found
in~\cite[Table~3]{GG}.
We verified by {\sc Magma}~\cite{Magma}
that these codes have $A_{18}=21320$ and $20960$, respectively.

\begin{table}[thb]
\caption{Four-negacirculant self-dual $[80,40,18]$ codes}
\label{Tab:80}
\begin{center}
{\footnotesize
\begin{tabular}{c|c|c}
\noalign{\hrule height1pt}
Code & $(r_A,r_B)$ & $A_{18}$ \\
\hline
$C_{80, 1}$&$((00210000002212002101),(22221120102012011001))$& 19360\\
$C_{80, 2}$&$((11020012111021001111),(10021112112011011202))$& 19760\\
$C_{80, 3}$&$((11020012111021001111),(11112121010011011202))$& 20240\\
$C_{80, 4}$&$((10112222112121111221),(20112102021100112001))$& 20480\\
$C_{80, 5}$&$((10100222101020022100),(20102101121021021112))$& 20640\\
$C_{80, 6}$&$((12121110011110210020),(02000020121112202210))$& 20720\\
$C_{80, 7}$&$((21000220020020220200),(01210202212020121210))$& 21120\\
$C_{80, 8}$&$((02202212000212102221),(21200212221202022011))$& 21200\\
$C_{80, 9}$&$((11211221011111022222),(01010012201212111120))$& 21600\\
$C_{80,10}$&$((11020012111021001111),(00122020002111011202))$& 21920\\
$C_{80,11}$&$((01200012212101200110),(00221200012110011200))$& 22160\\
$C_{80,12}$&$((02202212000212102221),(01210221112202022011))$& 22320\\
$C_{80,13}$&$((02011101122110012111),(00020202221101122121))$& 22720\\
$C_{80,14}$&$((01200012212101200110),(21210001022010011200))$& 22960\\
$C_{80,15}$&$((02011101122110012111),(20012002122001122121))$& 23280\\
$C_{80,16}$&$((21000220020020220200),(20000221020020121210))$& 23680\\
$C_{80,17}$&$((12121110011110210020),(22022222022102202210))$& 24160\\
$C_{80,18}$&$((12121110011110210020),(02002021202202202210))$& 24320\\
$C_{80,19}$&$((12121110011110210020),(21212110011012202210))$& 24400\\
$C_{80,20}$&$((10112222112121111221),(01022121011100112001))$& 24800\\
\noalign{\hrule height1pt}
\end{tabular}
}
\end{center}
\end{table}

By considering four-negacirculant codes, we found new
self-dual $[80,40,18]$ codes $C_{80,i}$ $(i=1,2,\ldots,20)$.
The first rows $r_A$ and $r_B$
of negacirculant matrices $A$ and $B$ in~\eqref{eq:G} are
listed in Table~\ref{Tab:80}.
The numbers $A_{18}$ for these codes are also listed in the table.
Hence, $C_{80,1}$ performs better than
the above self-dual codes constructed by subtraction,
the two codes in~\cite[Table~3]{GO}, \cite[Table~3]{GG}
and $C_{80,i}$ $(i=2,3,\ldots,20)$.

\subsection{Self-dual $[84,42,21]$ codes}
Using~\eqref{eq:WE}, the weight enumerator of a self-dual $[84,42,21]$ code is
\begin{align*}
&
1 + a y^{21}
+ (128391120 - 13 a) y^{24}
+ (13697686464 + 42 a) y^{27}
\\ &
+ (972882111168 + 350 a) y^{30}
+ (44029165524624 - 4655 a) y^{33}
\\ &
+ (1294136458420608 + 27531 a) y^{36}
+ \cdots
+ (46354176 - 8 a) y^{87},
\end{align*}
where $a$ is an integer with $1 \le a \le 5794272$.
The extended quadratic residue code $QR_{84}$ and
the Pless symmetry code $P_{84}$ of length $84$ are currently
the only known self-dual $[84,42,21]$ codes.
The code $QR_{84}$ has $A_{21}=2368488$ and $P_{84}$ has $A_{21}=1259520$~\cite{GNW}.
This means that $P_{84}$ performs better than $QR_{84}$.
Our extensive search failed to discover a four-negacirculant
self-dual $[84,42,21]$ code.

\subsection{Self-dual $[88,44,21]$ codes}
Using~\eqref{eq:WE}, the weight enumerator of a self-dual $[88,44,21]$ code is
\begin{align*}
&
1 + a y^{21}
+ (128391120 - 13 a) y^{24}
+ (13697686464 + 42 a) y^{27}
\\ &
+ (972882111168 + 350 a) y^{30}
+ (44029165524624 - 4655 a) y^{33}
\\ &
+ (1294136458420608 + 27531 a) y^{36}
+ (25036311539416320 - 108528 a) y^{39}
\\ &
+ \cdots
+ (46354176 - 8 a) y^{87},
\end{align*}
where $a$ is an integer with $1 \le a \le 5794272$.
A self-dual $[88,44,21]$ code can be found
in~\cite{Be84} (see also~\cite[Table~4]{GO}).
This code has $A_{21}=635712$~\cite{GNW}.
Our extensive search failed to discover a four-negacirculant
self-dual $[88,44,21]$ code.

\subsection{Self-dual $[92,46,21]$ codes}
Using~\eqref{eq:WE}, the weight enumerator of a self-dual $[92,46,21]$ code is
\begin{align*}
&
1 + a y^{21}
+ (46823400 - 5 a) y^{24}
+ (6304654752 - 62 a) y^{27}
\\ &
+ (541436863968 + 686 a) y^{30}
+ (30032673751080 - 1855 a) y^{33}
\\ &
+ (1093919194221984 - 9709 a) y^{36}
+ (26544142192296960 + 111720 a) y^{39}
\\ &
+ \cdots
+ (1766329344 - 64 a) y^{90},
\end{align*}
where $a$ is an integer with $1 \le a \le 9364680$.
We found ten self-dual $[92,46,21]$ codes constructed from
the Pless symmetry code $P_{96}$
of length $96$ by subtracting $e_4$, and
we determined that these codes have weight distributions
where
\begin{align*}
A_{21} =&
170536, 171300, 171772, 172000, 172344, 172392, 172640, 172668,
\\&
172764, 173236.
\end{align*}
A double twistulant self-dual $[92,46,21]$ code can be found in~\cite[Table~3]{GG}.
We verified by {\sc Magma}~\cite{Magma} that this code has $A_{21}=204608$.

\begin{table}[thb]
\caption{Four-negacirculant self-dual $[92,46,21]$ codes}
\label{Tab:92}
\begin{center}
{\footnotesize
\begin{tabular}{c|c|c}
\noalign{\hrule height1pt}
Code & $(r_A,r_B)$ & $A_{21}$ \\
\hline
$C_{92,1}$&$((00022211101222011012202),(12022121221012000210110))$&190532\\
$C_{92,2}$&$((10011201111102210102101),(20020102221000122101100))$&192648\\
$C_{92,3}$&$((10011201111102210102101),(00111022122201221101100))$&195408\\
$C_{92,4}$&$((00022211101222011012202),(10101220110122210210110))$&196696\\
$C_{92,5}$&$((21022000211022222102200),(22112012202002220122021))$&197892\\
$C_{92,6}$&$((10011201111102210102101),(11101212111120010201100))$&199916\\
$C_{92,7}$&$((00022211101222011012202),(20102001112021211210110))$&201388\\
$C_{92,8}$&$((21022000211022222102200),(11112220002020112222021))$&201572\\
$C_{92,9}$&$((21022000211022222102200),(20001011100020212122021))$&202676\\
$C_{92,10}$&$((00022211101222011012202),(20220022012010020210110))$&203688\\
\noalign{\hrule height1pt}
\end{tabular}
}
\end{center}
\end{table}

By considering four-negacirculant codes, we found new
self-dual $[92,46,21]$ codes $C_{92,i}$ $(i=1,2,\ldots,10)$.
The first rows $r_A$ and $r_B$
of negacirculant matrices $A$ and $B$ in~\eqref{eq:G} are
listed in Table~\ref{Tab:92}.
The numbers $A_{21}$ for these codes are also listed in the table.
Hence, the code $N_{92}$ constructed from $P_{96}$ by
subtracting $e_4$ with $A_{21}=170536$
performs better than
the code in~\cite[Table~3]{GG},
the nine other codes constructed from $P_{96}$ by
subtracting $e_4$, and $C_{92,i}$ $(i=1,2,\ldots,10)$.

To define $N_{92}$, we give the generator matrices of $P_{96}$ and $e_4$.
The code $P_{96}$ is the bordered double circulant code with the first row of
$R'$ in~\eqref{eq:bDCC} given by
\[
(01111211112212121112212211211222121211222212222),
\]
and border values $(\alpha,\beta,\gamma)=(0,1,1)$.
The code $e_4$ has generator matrix
${\displaystyle
\left( \begin{array}{ccccc}
1&0&1&1\\
0&1&1&2
\end{array}
\right)
}$.
Then $N_{92}$ is constructed from $P_{96}$ by subtracting $e_4$ where
the four coordinates are $\Gamma_{P_{96}}=(1,2,3,16)$.


\subsection{Optimal self-dual $[96,48,24]$ codes}
Using~\eqref{eq:WE}, the weight enumerator of an optimal self-dual $[96,48,24]$ code is
\begin{align*}
& 1 + a y^{24}
+( 3082778880  - 24 a) y^{27}
+( 272857821696  + 276 a) y^{30}
\\ &
+( 18642386018880  - 2024 a) y^{33}
+( 827849897536896 + 10626 a) y^{36}
\\ &
+( 24804181974320640 - 42504 a) y^{39}
+( 505747055590698240 + 134596 a) y^{42}
\\ &
+ \cdots
+( - 13283136 + a) y^{96},
\end{align*}
where $a$ is an integer with $13283136 \le a \le 128449120$.
The Pless symmetry code $P_{96}$ of length $96$ is a self-dual
$[96,48,24]$ code (see~\cite{GNW}).
A double twistulant self-dual $[96,48,24]$ code can be found in~\cite[Table~3]{GG}.
The code $P_{96}$ has $A_{24}=15358848$~\cite{GNW}.
We verified by {\sc Magma}~\cite{Magma}
that the code in~\cite[Table~3]{GG} has $A_{24}=15358848$.
Our extensive search failed to discover a four-negacirculant
self-dual $[96,48,24]$ code.

\subsection{Self-dual $[100,50,21]$ codes}
From~\cite[Table~4]{GG}, the largest minimum weight
among self-dual codes of length $100$ is $21$, $24$ or $27$.
A double twistulant self-dual $[100,50,21]$ code can be found
in~\cite[Table~3]{GG}.
It was claimed that $C_{100}$ in~\cite[Table~VI]{GH08} is a self-dual $[100,50,21]$ code.
Unfortunately, $C_{100}$ in~\cite[Table~VI]{GH08} was incorrectly stated to be a four-circulant code.
The correct construction is four-negacirculant, that is,
the correct self-dual $[100,50,21]$ code $N_{100}$
is the four-negacirculant code with first rows $r_A$ and $r_B$ are
listed in~\cite[p.~417]{GH08} for
negacirculant matrices $A$ and $B$ in~\eqref{eq:G}.
We verified by {\sc Magma}~\cite{Magma} that the code in~\cite[Table~3]{GG}
has $A_{21}=14400$ and the code $N_{100}$ has $A_{21}=20900$.
This means that the code in~\cite[Table~3]{GG} performs better than $N_{100}$.

\bigskip
\noindent
{\bf Acknowledgment.}
This work was supported by JSPS KAKENHI Grant Number 15H03633.


\begin{landscape}
\begin{table}[thb]
\caption{Double circulant and double twistulant codes}
\label{Tab:Res}
\begin{center}
{\small
\begin{tabular}{c|c|c|c||c|c|c||c|c|c||c|c}
\noalign{\hrule height1pt}
$n$
& $d_{P}$  & $A_{d_{P}}$ & Code
& $d_{B}$  & $A_{d_{B}}$ & Code
& $d_{T}$  & $A_{d_{T}}$ & Code
& $d_{SD}$ & $A_{d_{SD}}$  \\
\hline
 4& 2&  2& $P_{4,1}, P_{4,2}$  & 2 & 2 & $B_4$  & 2 & 4 & $e_4$&  3& 8 \\
 8& 4& 20& $P_{8,1}$ in~\cite{DGH} & 4 & 22 & $B_{8}$ &4&24& $T_8$& 3& 16 \\
  & 3$^*$& 8 & $P'_8$ & 3$^*$ & 2 & $B'_8$ & 3$^*$ &8 &$T'_{8,1},T'_{8,2}$
 & &      \\
12& 5& 48& $P_{12,1},P_{12,2}$  & 5& 30& $B_{12}$ & 5 &8 &$T_{12}$ &  6&     264 \\
16& 6& 96& $P_{16,3}$ in~\cite{DGH}& 6 & 84 & $B_{16}$& 6 & 96 & $T_{16}$&  6&     224 \\
20& 7&200& $P_{20,3}$ in~\cite{DGH}& 7 &198 & $P_{20,8}$
  in~\cite{DGH}\footnotemark & 7 & 200 & $T_{20,1},T_{20,2},T_{20,3}$&  6&     120 \\
  &6$^*$ & 10 & $P'_{20}$ & 6$^*$ & 6 & $B'_{20}$  & 6$^*$  &20 &$T'_{20}$ & &   \\
24& 8&348& $P_{24}$  & 8& 264& $B_{24,1}B_{24,2},B_{24,3}$  & 8 & 312& $T_{24}$&  9&    4048 \\
28& 9&924& $P_{28}$  & 9& 832& $B_{28}$  & 9 & 616 & $T_{28}$&  9&    2184 \\
32&10&2208& $P_{32,1},P_{32,2}$ & 9 & 60 & $B_{32}$  & 9 & 32& $T_{32}$&  9&     960 \\
  & 9$^*$& 64& $P'_{32}$ & & &   &  &  & & &     \\
36&10&270&$P_{36}$ & 11 & 2244 & $B_{36}$  & 10 & 252 & $T_{36}$ & 12&   42840 \\
40&11& 720 &$P_{40}$ & 11 &722 &$B_{40}$ & 11 & 480 & $T_{40}$  & 12&   19760 \\
44&13 &19712 &$P_{44,1},P_{44,2}$ & 12 & 2436 &$B_{44}$
  &13 &19712 & $T_{44}$ & 12& 8008 \\
  &12$^*$ &1716 &$P'_{44}$ &  & &   &12$^*$ & 1716 & $T'_{44}$ & &  \\
\noalign{\hrule height1pt}
\end{tabular}
}
\end{center}
\end{table}
\footnotetext{The border values $(\alpha,\beta,\gamma)$ of $P_{20,8}$ were incorrectly reported in~\cite[Table 5]{DGH},
the correct values are $(2,1,1)$.}
\end{landscape}


\begin{thebibliography}{99}

\bibitem{Be84}G.F.M. Beenker,
A note on extended quadratic residue codes over GF(9) and their ternary images,
{\sl IEEE Trans.\ Inform.\ Theory}
{\bf 30}  (1984),  403--405.

\bibitem{Magma}W. Bosma, J. Cannon and C. Playoust,
The Magma algebra system I: The user language,
{\sl J. Symbolic Comput.}
{\bf 24} (1997), 235--265.



\bibitem{CPS79}J.H. Conway, V. Pless and N.J.A. Sloane,
Self-dual codes over GF(3)  and GF(4) of length not exceeding 16,
{\sl IEEE Trans.\ Inform.\ Theory}
{\bf 25}  (1979), 312--322.

\bibitem{DGH}S.T. Dougherty, T.A. Gulliver and M. Harada,
Optimal	ternary formally self-dual codes,
{\sl Discrete Math.}
{\bf 196}  (1999), 117--135.

\bibitem{FLOW}A. Faldum, J. Lafuente, G. Ochoa and W. Willems,
Error probabilities for bounded distance decoding,
{\sl Des.\ Codes Cryptogr.}
{\bf 40} (2006), 237--252.

\bibitem{GNW}P. Gaborit, C.-S. Nedeloaia and A. Wassermann,
On the weight enumerators of duadic and quadratic residue codes,
{\sl IEEE Trans.\ Inform.\ Theory}
{\bf 51}  (2005),  402--407.

\bibitem{GO}P. Gaborit and A. Otmani,
Experimental constructions of self-dual codes,
{\sl Finite Fields Appl.}
{\bf 9} (2003), 372--394.

\bibitem{GG}M. Grassl and T.A. Gulliver,
On circulant self-dual codes over small fields,
{\sl Des.\ Codes Cryptogr.}
{\bf 52}  (2009),  57--81.

\bibitem{GH08} T.A. Gulliver and M. Harada,
New nonbinary self-dual codes,
{\sl IEEE Trans.\ Inform.\ Theory}
{\bf 54} (2008), 415--417.

\bibitem{HHKK} M. Harada, W. Holzmann, H. Kharaghani and M. Khorvash,
Extremal ternary self-dual codes constructed from negacirculant
matrices,
{\sl Graphs Combin.}
{\bf 23}  (2007), 401--417.

\bibitem{Hu05} W.C. Huffman,
On the classification and enumeration of self-dual codes,
{\sl Finite Fields Appl.}
{\bf 11} (2005), 451--490.

\bibitem{MS73}C.L. Mallows and N.J.A. Sloane,
An upper bound for self-dual codes,
{\sl Inform.\ Control}
{\bf  22} (1973), 188--200.


\end{thebibliography}
\end{document}